\def \opt{{\mathsf{opt}}}
\def\SMK{\mathsf{SMK}}
\def\LA{\mathsf{LA}}
\def\DLA{\mathsf{DLA}}
\def\EDL{\mathsf{EDL}}
\def\NSMK{\mathsf{\mbox{non-monotone } SMK}}
\begin{document}
	
	\title{Enhanced Deterministic Approximation Algorithm  for Non-monotone Submodular Maximization under Knapsack Constraint with Linear Query Complexity}
	%\subtitle{Fast Bicriteria Approximation Algorithm for Minimum Cost Submodular Cover Problem}
	
	\titlerunning{Enhanced Deterministic Approximation Algorithm  for $\SMK$...}        % if too long for running head
	
	\author{Canh V. Pham}
	
	\authorrunning{Canh V. Pham} % if too long for running head
	
	\institute{Canh V. Pham (Corresponding Author) \at
		ORLab, Faculty of Computer Science, Phenikaa University, Hanoi, Vietnam
		\email{canh.phamvan@phenikaa-uni.edu.vn}
	}
	
	\date{Received: date / Accepted: date}
	% The correct dates will be entered by the editor

	\maketitle
	
	\begin{abstract}
	In this work, we consider the Submodular Maximization under Knapsack ($\SMK$) constraint problem over the ground set of size $n$. The problem recently attracted a lot of attention due to its applications in various domains of combination optimization, artificial intelligence, and machine learning. We improve the approximation factor of the fastest deterministic algorithm from $6+\epsilon$ to $5+\epsilon$ while keeping the best query complexity of $O(n)$, where $\epsilon >0$  is a constant parameter. Our technique is based on optimizing the performance of two components: the threshold greedy subroutine and the building of two disjoint sets as candidate solutions. Besides, by carefully analyzing the cost of candidate solutions, we obtain a tighter approximation factor.
		\keywords{Submodular Maximization, Knapsack Constrant, Approximation Algorithm, Query Complexity}
		% \PACS{PACS code1 \and PACS code2 \and more}
		%\subclass{MSC code1 \and MSC code2 \and more}
	\end{abstract}
	\section{Introduction}
	Submodular Maximization under a Knapsack ($\SMK$) constraint plays a crucial role in the fields of combinatorial optimization, artificial intelligence, and machine learning. In this problem, it's given a ground set $V$ of sized $n$ and a non-negative (not necessarily monotone) submodular set function  $f: 2^V \mapsto \mathbb{R}_+$. Each element $e \in V$  is assigned a
	positive cost $c(e)$ and there is a budget $B$,  $\SMK$ asks for finding $S\subseteq V$ with minimal total cost $c(S)=\sum_{e \in S}c(e)\leq B$ so that maximizes $f(S)$. 
	
	$\SMK$ problem finds a wide-rage of applications such as maximum weighted cut~\cite{Amanatidis_sampleGreedy,Han2021_knap}, data summarization~\cite{Han2021_knap,fast_icml}, revenue maximization~\cite{Han2021_knap,Cui-streaming21}, information propagation in social networks \cite{k-sub-Canh-csonet21,cor-24} and recommendation systems~\cite{Amanatidis2021a_knap,Amanatidis_sampleGreedy}, thereby it  has paid a lot of attention recently \cite{fast_icml,Amanatidis2021a_knap,Han2021_knap,sub_knap_orlet,li-linear-knap-nip22,Ene_HuyLNg_knap,Lee_nonmono_matroid_knap,Amanatidis_sampleGreedy,Gupta_nonmono_constrained_submax}.

	In addition to focusing on approximation algorithms with tight factors for $\SMK$, people focus on solving the problem within a reasonable time cost, especially in the era of big data. Since query complexity evaluated by the number of required queries to the objective function dominates the running time of an algorithm, it is important to reduce the query complexity of algorithms.	
	Besides, recent researches show that deterministic algorithms often give unique and better solutions than randomized algorithms in practices~\cite{li-linear-knap-nip22,Kuhnle19-nip,pham-ijcai23,kdd-chen23}. 
	Therefore, it is necessary to design efficient deterministic algorithms that both guarantee the theoretical bounds and waste a low query complexity. To the best of our knowledge, the fastest deterministic approximation algorithm is due to Pham {\em et al.}~\cite{pham-ijcai23}. Their algorithm provides an approximation factor of $6+\epsilon$ in linear query complexity of $O(n \log(1/\epsilon)/\epsilon)$ (See the Table~\ref{tab:1} for an overview of deterministic algorithms). However, there is still a large gap between  Pham {\em et al.}'s algorithm and the best factor of $2.6$ in \cite{BuchbinderF19-bestfactor}. This raises an open and interesting question: \textit{Can we improve the approximation factor of an algorithm in linear query complexity for the studied problem?} 
	
	\paragraph{\textbf{Our contributions.}}  In this work, we address the above question by introducing an approximation algorithm with a better factor of $5+\epsilon$ in linear query complexity. Our technique improves the algorithm framework of $\DLA$ in \cite{pham-ijcai23} with a tighter theoretical analysis. Firstly, we re-design the combination of the threshold greedy subroutine and the building of two disjoint solutions. Secondly, we explore a strong connection between the cost of candidate solutions and the optimal cost. Therefore, by more carefully analyzing the above relation, we obtain a tighter approximation factor without increasing computation cost.
	\begin{table*}[h]
		\centering
		\begin{tabular}{ccccc}
			\hline
			\textbf{Reference} 
			& \textbf{Approximation factor} & \textbf{Query Complexity}
			\\
			\hline
			GREEDY	\cite{Gupta_nonmono_constrained_submax} &$6$& $O(n^5)$
			\\
			Twin Greedy \cite{best-dla} & $4+\epsilon$ & $O(n^3\log(n)/\epsilon)$
			\\
			SMKDETACC~\cite{Han2021_knap}  &$6+\epsilon$&$O(n\log(k/\epsilon)/\epsilon)$
			\\
			$\DLA$~\cite{pham-ijcai23}& $6+\epsilon$  & {\boldmath $O(n \log(1/\epsilon)/\epsilon)$}
			\\
			Our algorithm ($\EDL$) & {\boldmath $5+\epsilon$} & {\boldmath $O(n \log(1/\epsilon)/\epsilon)$}
			\\
			\hline
		\end{tabular}
		\caption{Deterministic approximation algorithms for $\NSMK$ problem. $k$ is the maximum cardinality of any feasible solution to $\SMK$.  The best result(s) are bold}
		\label{tab:1}
	\end{table*}
	\paragraph{Paper Organization.} The rest of the paper is structured as follows. 
	Section \ref{sec:relatedwork}	provides the literature review on the $\NSMK$ problem. Preliminaries on
	submodularity and the studied problem are presented in Section \ref{sec:preli}. Section \ref{sec:algs} introduces two proposed algorithms and theoretical analysis. Finally, we conclude this work in Section \ref{sec:con}.
	%%% Checkhere
	\section{Related Works}
	\label{sec:relatedwork}
	% In this section, we review the related work for the $\SMK$ problem.
	%\mt{This section, let's just focus on the non-monotone SMK problem. Move all others to the appendix}
	\paragraph{$\SMK$ problem with monotone function.} Wolsey {\em et al.}
	\cite{sub_Wolsey_knap} first solved the $\SMK$ problem with monotone objective function. They proved that it was NP-hard to give an approximation algorithm the factor of $(e/(e-1))$. The latter works  \cite{sub_knap_orlet,leskovec_celf}  proposed greedy versions with the same optimal approximation factor of $e/(e-1)$ for monotone $\SMK$ problem; however, they took an expensive query complexity of $O(n^5)$. 
	The recent work \cite{mono-smk-o3} kept the optimal factor while reducing query complexity to $O(n^3 \log n)$. Since then, several works have been made to reduce the query complexity to achieve the optimal approximation factor. Authors~\cite{fast_sub} introduced a faster  algorithm that had $O(n^2(\log(n)/\epsilon)^{1/\epsilon^8})$ but this work contained trivial errors~\cite{Ene_HuyLNg_knap}. \cite{Ene_HuyLNg_knap} tried to obtain $(e/(e-1)+\epsilon)$ factor but this work had to handle complicated multi-linear extensions which required an expensive  $O((1/\epsilon)^{O(1/\epsilon^{4})}n\log^2 n)$ number of function evaluations. Another work claimed the factor of $2+\epsilon$ for monotone $\SMK$ within $O(nk)$ \cite{Grigory}. However, a loophole in the theoretical analysis of this work was pointed out in~\cite{Han2021_knap}. 
	%A streaming fashion model is an excellent approach to solving submodular optimization for massive data since it needs less memory and running time and reduces queries than offline algorithms. Two recent algorithms of \cite{sub_knap_stream} and  \cite{streaming-skm-25} are streaming ones with the same factor of $2.5+\epsilon$ and query complexity  $O(n\log (B)/\epsilon^4)$ in one pass and two passes, respectively where $B$ is the limited cost of the solution. The algorithm of \cite{streaming-smk-multi} reaches a better factor of $2+\epsilon$ but also requires a more significant number of passes and queries of $O(1/\epsilon)$ and $O(n\log^2(B)/\epsilon^8)$, respectively. Especially, \cite{li-linear-knap-nip22} made a breakthrough theoretical by developing a $(2+\epsilon)$-approximation algorithm in a clean linear number of queries $O(n\log(1/\epsilon)/\epsilon)$. They also showed no existing constant factor approximation in $O(n/\log n)$ queries. 
	\paragraph{$\SMK$ problem with non-monotone function.} Solving non-monotone $\SMK$ is more challenging than the monotone case. First, the property of the monotone property plays an important role in analyzing the theoretical bound. Besides, algorithms for the non-monotone case need more queries to obtain information from all elements in the condition that the marginal contribution of an element may be negative.
	
	Randomized methods are efficient tools for designing algorithms for the non-monotone $\SMK$ problem with theoretical bounds. The authors in \cite{lee-car-10}  first introduced a randomized algorithm with a factor of $5+ \epsilon$ in a polynomial time; the factor was later enhanced to $4+\epsilon$ by \cite{Kulik13_Submax_knap}. Since then, several works tried to enhance the approximation factor to  $e+\epsilon$  \cite{ChekuriVZ14,ran-smk-Felman11,Ene_HuyLNg_knap,BuchbinderF19-bestfactor}. The best factor in this research line is due to \cite{BuchbinderF19-bestfactor}. In the seminal work, they introduced a randomized algorithm with a $2.6$ factor by combining the multi-linear extension method and the rounding scheme technique in \cite{Kulik13_Submax_knap}.
	However, this work required an expensive query complexity to handle multi-linear extensions.  Authors in
	\cite{Amanatidis2021a_knap} first proposed a fast sample greedy algorithm  with $5.83+\epsilon$ factor in nearly linear query complexity of $O(n \log(n/\epsilon)/\epsilon)$.
	The factor was improved to $4+\epsilon$ by Han {\em et al.}~\cite{Han2021_knap}. More recently, Pham {\em et al.}~\cite{pham-ijcai23} significantly reduced the query complexity to linear while keeping the $4+\epsilon$ factor.
	
	Deterministic algorithms often give better results in practice than randomized ones \cite{Kuhnle19-nip,pham-ijcai23,Han2021_knap}. The first work in the research line was due to \cite{Gupta_nonmono_constrained_submax}. The authors first presented a deterministic algorithm with a factor of $6$.  However, it took $O(n^5)$ query complexity. Since then, there are several algorithms have been proposed to reduce the number of queries. The FANTOM algorithm \cite{Mirzasoleiman_data_sum} reduced queries to $O(n^2 \log(n)/\epsilon)$ but had a larger factor of $10$. The authors in \cite{nearly-liner-der-2018} introduced an algorithm with $9.5+ \epsilon$ factor in $O(nk)\max\{\epsilon^{-1}, \log \log n\}$ and it can be adapted for richer constants. \cite{best-dla} achieved a best factor of $4+\epsilon$ for a deterministic algorithm. However, the high query complexity of $O(n^3 \log(n/\epsilon)/\epsilon)$ made it may be impractical. Recently, the authors in \cite{Han2021_knap} introduced a fast deterministic algorithm with $6+\epsilon$ factor within nearly-linear queries $O(n\log (k/\epsilon)/\epsilon)$. Currently, the faster deterministic algorithm is due to Pham {\em et al.}~\cite{pham-ijcai23}. They introduced an algorithm named $\DLA$ that returned a factor of $6+\epsilon$ and wasted $O(n)$ queries. Our algorithm has improved the factor to $5+\epsilon$ in linear query complexity by contributing some important changes with tighter analysis.
	\section{Preliminaries}
	\label{sec:preli}
	A set function $f: 2^V \mapsto \mathbb{R}_+ $, defined on all subsets of a ground set $V$ of size $n$  is  submodular iff  for any $A \subseteq B \subseteq V$ and $e \notin B$, we have: 
	\begin{align*}
	f(A \cup \{e\}) - f(A) \geq f(B \cup \{e\})- f(B).
	\end{align*}
	Each element $e \in V$ has a positive cost $c(e)>0$, and the total cost of a set $S\subseteq V$ is a modular function, i.e. $c(S) =\sum_{e \in S}c(e)$. Given a positive number $B$ (knapsack constraint), we assume that every item $e\in V$ satisfies $c(e) \leq  B$; otherwise, we can simply discard it. The Submodular Maximization under Knapsack ($\SMK$) is formally defined as follows:
	\begin{definition}[$\SMK$ problem] Given a ground set $V$,  a submodular function  $f: 2^V \mapsto \mathbb{R}_+ $,
		and a positive number $B$, The $\SMK$ problem aims at finding a subset $S$ with the total cost $c(S)\leq B$ so that $f(S)$ is maximized.
	\end{definition}
	An instance of $\SMK$ is denoted by a tuple $(f, V, B)$. For simplicity, we assume that $f$ is non-negative, i.e. $f(X)\geq 0$ for all $X\subseteq V$ and  normalized, i.e., $f(\emptyset)=0$. We define the contribution gain of a set $S$ to a set $T\subseteq V$ as $f(S|T)=f(S\cup T)-f(T)$, and simplify $f(\{e\}|T)$ by $f(e|S)$.
	We assume that there exists an oracle query, which when queried with the
	set $T$ returns the value of $f(T)$.
	We denote $O$ as an optimal solution with the optimal value $\opt=f(O)$, $r=\arg\max_{o \in O}c(o)$ and $O'=O\setminus\{r\}$.
	\section{The proposed Algorithm: $\EDL$}
	\label{sec:algs}
	We now introduce our \textbf{E}nhanced \textbf{D}eterministic with \textbf{L}inear query complexity ($\EDL$) algorithm that has an approximation factor of $5+\epsilon$ in the query complexity of $O(n\log (1/\epsilon)/\epsilon)$ for $\SMK$ problem. 
	
	\subsection{Algorithm description}
	$\EDL$ receives an instance $(f, V, B)$ and a parameter $\epsilon>0$ as inputs. $\EDL$ works in two steps as follows:
	\\
\textbf{Step 1:} It calls a subroutine, the $\LA$ algorithm in \cite{pham-ijcai23}, to get a feasible solution $S'$ with an approximation factor of $19$. Therefore, it provides a $\log_{\frac{1}{1-\epsilon'}}(\frac{19}{\epsilon'^2}) \rceil+1=O(\log(19/\epsilon')/\epsilon')$ guesses of $\opt$ (recall that $\opt$ is the value of an optimal solution) in a range of $[M, 19M]$ where $M=f(S')$ (line 1),  where $\epsilon'=\epsilon/14$. This step helps to value of  threshold   $\theta=\frac{(1-\epsilon')^i 19M}{5\epsilon' B}$ may vary from $\frac{\opt}{5\epsilon'B}$ down to $\frac{\epsilon'(1-\epsilon')\opt}{B}$, a reasonable rage to get the  desired factor.
%It must be exist a guess $v=(1-\epsilon)^i \in  [M, 19M]$ so that $v\leq \opt \leq v/(1-\epsilon)$. 
\\
\textbf{Step 2:} This step  works in a main loop with $O(\log(19/\epsilon')/\epsilon')$ iterations (Lines 3-9). At each iteration $i$, it adapts the greedy threshold to add elements with high-density gain (i.e. the ratio between the marginal gain of an element and its cost) into two disjoint sets $X$ and $Y$. Elements are subsequently added to the set $T \in \{X, Y\}$ to which has the larger density gain without violating the budget constraint, as long as the density gain is at least $\theta=\frac{19M(1-\epsilon')^i}{5\epsilon' B}$ (Line 6).
	Finally, the algorithm returns the best solution between $X$ and $Y$. The details of $\EDL$ are presented in Algorithm~\ref{alg:dla}.
	\begin{algorithm}
		\SetNlSty{text}{}{:}
		%\SetAlgoNlRelativeSize{0}
		\KwIn{An instance $(f, V, B)$, $\epsilon$.} 
		%	\KwOut{A solution $S$}
		\tcp{Step 1: Pre-processing}
		$S'\leftarrow$ Result of  $\LA$~\cite{pham-ijcai23} with inputs $(f, V, B)$, $M \leftarrow f(S'), \epsilon' \leftarrow \epsilon/14$ \label{dla:p1-b}
		\\
			\tcp{Step 2: Construct candidate solutions}
		$X\leftarrow \emptyset, Y \leftarrow \emptyset$%, Z\leftarrow \emptyset$
		\\
		\For{$i=0$ to $\lceil \log_{\frac{1}{1-\epsilon'}}(\frac{19}{\epsilon'^2}) \rceil+1$}
		{
			$\theta \leftarrow 19M (1-\epsilon')^i/(5\epsilon' B)$
			\\
			\ForEach{$e \in V \setminus (X\cup Y)$}
			{
					Find $T \in \{X, Y\}$ with $c(T \cup \{e\})\leq B$ such that:
					$T = \arg \max_{T \in \{X, Y\}, \frac{f(e|T)}{c(e)}\geq\theta}\frac{f(e|T)}{c(e)}$
					\\
					\textbf{ If} such set $T$ exists  \textbf{then} $T \leftarrow T \cup \{e\}$
			}
		}
		\label{dla:p1-e}
		%	\For{$l=0$ to $ \Delta$}
		%	{ \label{dla:p2-b}
		%		$X'_{(l)} \leftarrow \arg\max_{X^i: c(X^i)\leq \epsilon'B(1+\epsilon')^l, i\leq |X|}i$
		%		\\
		%		$Y'_{(l)} \leftarrow \arg\max_{Y^i: c(Y^i)\leq \epsilon'B(1+\epsilon')^l, i\leq |X|}i$
		%		\\
		%		$e_X \leftarrow  \arg\max_{e \in V:  c(X'_{(l)}\cup \{e\})\leq B}f(X'_{(l)}\cup \{e\})$ \label{alg:dla-select_ex}
		%		\\
		%		$e_Y \leftarrow  \arg\max_{e \in V:  c(Y'_{(l)}\cup \{e\})\leq B}f(Y'_{(l)}\cup \{e\})$
		%		\\
		%		$X_{(l)}\leftarrow X'_{(l)} \cup \{e_X\}$, 
		%		$Y_{(l)}\leftarrow Y'_{(l)} \cup \{e_Y\}$
		%	} \label{dla:p2-e}
		$S \leftarrow \arg \max_{T \in \{ X, Y \}}f(T)$
		\\ 
		\Return $S$.
		\caption{$\EDL$ Algorithm}
		\label{alg:dla}
	\end{algorithm}
	%The second phase (lines~\ref{dla:p2-b}-\ref{dla:p2-e}) is to improve the quality of candidate solution $T\in \{X, Y\}$ which was obtained at the end of phase 1.  Denote  $T^i$ as a set of the first $i^{th}$ elements added in $T$. 	Our main observation is that the performance of $\DLA$ depends on the cost of $T'=\arg\max_{T^i: c(T^i)\leq B-c(r)}(i)$. Recall that $r$ is $\arg\max_{o\in O}c(o)$ and $c(r)\leq B$.
	%Thus we scan an upper bound of $c(T')$ from $\epsilon'B$ to $B$ and improve the quality of $T'$ by adding into it an element $e=\arg\max_{e\in V: c(T')+c(e)\leq B}f(T'\cup \{e\})$ (lines 13-15).
	\paragraph{Theoretical analysis.}
%	The following Lemmas give the bounds of the final solution when $c(r)< (1-\epsilon')B$ and $c(r)\geq(1-\epsilon')B$, respectively.
We first provide the following useful notations.
	\begin{itemize}
		\item $O'=O\setminus \{r\}$.
		\item 	Assuming that $X=\{x_1, x_2, \ldots x_{|X|}\}$ we denote $X^i=\{x_1, x_2, \ldots x_{i}\}$, and $t=\max\{i: c(X^i)+ c(r)\leq B\}$.
		\item  $Y=\{y_1, y_2, \ldots y_{|Y|}\}$ we denote $Y^i=\{y_1, y_2, \ldots y_{i}\}$, and $u=\max\{i: c(Y^i)+ c(r)\leq B\}$.
		\item $X_j$ and $Y_j$ are $X$ and $Y$ after iteration $j$ of the first loop of the Algorithm \ref{alg:dla}, respectively.
		\item 	
		For $e\in X \cup Y$, we denote by $X^{<e}$ and $Y^{<e}$ the set of elements in $X$ and $Y$ before adding $e$ into  $X$ or $Y$, respectively and	$l(e)$ is the iteration when $e$ is added to $X$ or $Y$.
		\item Denote by $\theta_i$ as $\theta$ at the iteration $i$, by $\theta^X_{(j)}$ as $\theta$ when the element $x_j$  is added into $X$, and by $\theta^Y_{(j)}$ as $\theta$ when the element $y_j$  is added into $Y$. 
		\item Finally,  $\theta_{last}$ is $\theta$ at the last iteration of the first loop.
	\end{itemize}
	We first provide a bound of the optimal solution when $c(r)<(1-\epsilon')B$ by carefully analyzing the cost of $X$ and $Y$ in comparison with $B-c(r)$ in Lemma~\ref{lem:r-small}.
	\begin{lemma}If $c(r)< (1-\epsilon')B$, we have $f(O)\leq (5+\epsilon)f(S)$.
		\label{lem:r-small}
	\end{lemma}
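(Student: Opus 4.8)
\emph{Proof plan.} The strategy is to first bound $f(O)$ by the values of the two candidate sets through submodularity, and then split according to whether the construction ended with slack in the knapsack or filled one of the sets up; the sharper comparison of $c(X)$ and $c(Y)$ with $B-c(r)$ in the latter case is what drives the factor down from $6+\epsilon$ to $5+\epsilon$. For the reduction, since $X\cap Y=\emptyset$ and $f$ is non-negative and submodular, $f(O)\le f(O\cup X)+f(O\cup Y)$; expanding each term by submodularity and using $f(o|X)=0$ for $o\in O\cap X$ gives
\[
f(O)\le f(X)+f(Y)+\sum_{o\in O\cap Y}f(o|X)+\sum_{o\in O\cap X}f(o|Y)+\sum_{o\in O\setminus(X\cup Y)}\bigl(f(o|X)+f(o|Y)\bigr).
\]
For $o\in O\cap Y$, as long as $X$ was still feasible together with $o$ when $o$ entered $Y$, the arg-max rule forces $f(o|X^{<o})\le f(o|Y^{<o})$, hence $f(o|X)\le f(o|Y^{<o})$ by submodularity; telescoping these marginals inside $f(Y)$ bounds the third sum by $f(Y)$, and symmetrically the fourth by $f(X)$. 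So everything reduces to controlling $f(o|X),f(o|Y)$ for the elements of $O$ that were never selected, plus the few elements (possibly $r$) for which the feasibility caveat fails.

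\textbf{Case 1: $c(X)\le B-c(r)$ and $c(Y)\le B-c(r)$.} Here $c(X\cup\{o\})\le (B-c(r))+c(o)\le B$ for every $o\in O$ (using $c(o)\le c(r)$), and likewise for $Y$, so no element is ever rejected for budget and all feasibility caveats hold. Thus for each $o\in O\setminus(X\cup Y)$ one gets $f(o|X)/c(o)<\theta_{last}$ and $f(o|Y)/c(o)<\theta_{last}$, and the inequality above collapses to $f(O)\le 2f(X)+2f(Y)+2\theta_{last}\,c(O)\le 4f(S)+2\theta_{last}B$. Since $\theta_{last}\le \epsilon'(1-\epsilon')M/(5B)\le \epsilon'\,\opt/(5B)$ and $M=f(S')\le\opt$, rearranging gives $f(O)\le 4f(S)/(1-2\epsilon'/5)\le (5+\epsilon)f(S)$.

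\textbf{Case 2: one set is full, say $c(X)>B-c(r)$} (whence $c(X)>\epsilon'B$ by the hypothesis on $c(r)$). Let $x_{t+1}$ be the first element of $X$ not fitting alongside $r$. Since elements enter with density at least the current threshold and the threshold is non-increasing,
\[
f(X)\ \ge\ f(X^{t+1})\ \ge\ \theta^X_{(t+1)}\,c(X^{t+1})\ >\ \theta^X_{(t+1)}(B-c(r))\ \ge\ \theta^X_{(t+1)}\,c(O'),
\]
and, using the $\LA$ warm-start guarantee $M\ge\opt/19$, also $f(X)>\theta^X_{(t+1)}\epsilon'B=\tfrac{1}{5}\cdot 19M(1-\epsilon')^{l(x_{t+1})}\ge \tfrac{1}{5}(1-\epsilon')^{l(x_{t+1})}\opt$. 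If $l(x_{t+1})$ is small (in particular $0$), this last bound already yields $\opt<(5+\epsilon)f(S)$. Otherwise $\theta^X_{(t+1)}$ is small: for each $o\in O'$ never placed in $X$ (and more generally any $o$ failing the feasibility caveat) we examine iteration $l(x_{t+1})-1$, at which point $X$ was still a subset of $X^t$ and hence feasible together with $o$; rejection there gives $f(o|X)\le f(o|X^t)<\tfrac{\theta^X_{(t+1)}}{1-\epsilon'}c(o)$, and summing over $O'$ while invoking $\theta^X_{(t+1)}c(O')<f(X)$ bounds this whole block of extra marginals by $\tfrac{1}{1-\epsilon'}f(X)$, up to a $\theta_{last}B$-size remainder from the (non-full) set $Y$. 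Combining, $f(O)\le \bigl(2+\tfrac{1}{1-\epsilon'}\bigr)f(X)+2f(Y)+O(\epsilon'\opt)\le (5+\epsilon)f(S)$; the subcase $c(Y)>B-c(r)$ is symmetric.

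\textbf{Main obstacle.} The delicate point is entirely in Case 2: for every element of $O'$ that is either unselected or landed in the wrong one of $X,Y$, one must pin down an iteration at which it was examined while the relevant set was still light enough to accept it, so that a clean density bound applies; and the heaviest optimal element $r$ must be handled separately throughout, because its cost, though below $(1-\epsilon')B$, may be a constant fraction of $B$ and therefore cannot be folded into the estimate $\theta^X_{(t+1)}c(O')<f(X)$. The adversarial within-iteration processing order is what makes this bookkeeping subtle, and it is exactly this sharper use of the fact that a full set already has value at least $\theta^X_{(t+1)}c(O')$ — replacing the weaker "$2f(X)$" term of the earlier $\DLA$ analysis by "$\tfrac{1}{1-\epsilon'}f(X)$" — that lowers the factor from $6$ to $5$.
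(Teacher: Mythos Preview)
Your case split according to whether $c(X)$ and $c(Y)$ exceed $B-c(r)$ matches the paper's, and your Case~1 (both sets light) is essentially the paper's Case~3. The gap is that your Case~2 only treats the situation where \emph{exactly one} set is heavy: you explicitly invoke ``a $\theta_{last}B$-size remainder from the (non-full) set $Y$'', and the closing remark ``the subcase $c(Y)>B-c(r)$ is symmetric'' merely swaps the roles of $X$ and $Y$. The scenario $c(X)>B-c(r)$ \emph{and} $c(Y)>B-c(r)$ is never handled. In the paper this is Case~1 and is the most delicate part of the argument: once neither set admits a clean $\theta_{last}$ bound, one must run the threshold comparison symmetrically for \emph{both} sets, using the index $u$ with $c(Y^u)+c(r)\le B<c(Y^{u+1})+c(r)$ and the threshold $\theta^Y_{(u+1)}$ in addition to $t$ and $\theta^X_{(t+1)}$, and then exploit $c(X^{t+1}\setminus O')\ge c(O'\setminus(X^{t+1}\cup Y^q))$ together with the analogous inequality for $Y^{u+1}$ to absorb the ``unselected'' marginals into $\frac{1}{1-\epsilon'}(f(X^{t+1})+f(Y^q))$. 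Your sketch contains none of this machinery.

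A second, smaller issue is the bookkeeping for $r$ in your Case~2. Your claimed final inequality $f(O)\le\bigl(2+\tfrac{1}{1-\epsilon'}\bigr)f(X)+2f(Y)+O(\epsilon'\opt)$ sums to roughly $5f(S)$ with no slack left for $r$; yet if $r\notin X$ the term $f(r|X)$ cannot be folded into the $\theta^X_{(t+1)}c(O')$ estimate (since $c(r)$ is not part of $c(O')$), nor is it $O(\epsilon'\opt)$ in general. The paper resolves this by separating $r$ \emph{before} applying submodularity, writing $f(O\cup X^{t+1})\le f(O'\cup X^{t+1})+f(r)$ and then bounding $f(O'\cup X^{t+1})$; the extra $f(r)\le f(S)$ is what uses up the fifth copy of $f(S)$. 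Your plan asserts that $r$ ``must be handled separately throughout'' but never actually does so in the count, and as written the numbers do not close.
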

	\begin{proof} We prove the Lemma by considering the following cases:
		\\
		\textbf{Case 1.} If both $c(X)\geq B-c(r)$ and $c(Y)\geq B-c(r)$. $X$ must contain at least $t+1$ elements and $Y$ must contain at least $u+1$ elements.
		 Without loss of generality, we assume that the algorithm obtains $Y^{u+1}$ before $X^{t+1}$. If the algorithm obtains $X^{t+1}$ after the first iteration, we have $f(S)\geq f(X^{t+1})\geq c(X^{t+1})\theta_{1}\geq \frac{\opt}{5}$. Thus the Lemma holds. We consider the otherwise case, i.e., $X^{t+1}$ obtained at the iteration $j\geq 2$. Denote $Y^q=Y^{<x_{t+1}}$ we get $Y^{u+1}\subseteq Y^q$.  By the selection rule of the algorithm, each element $e \in Y^{q}$ has the density gain satisfying:
		\begin{align}
		\frac{f(e|X^{t+1})}{c(e)} \leq 	\frac{f(e|X^{<e})}{c(e)} \leq  \frac{f(e|Y^{<e})}{c(e)}.
		\label{lem:dla1-rule}
		\end{align}	
		By the submodularity of $f$, we have:
		\begin{align}
		f(O'\cup X^{t+1})-f(X^{t+1}) &\leq \sum_{e\in O' \setminus X^{t+1}}f(e|X^{t+1})
		\\
		&=  \sum_{e\in O' \cap Y^q}f(e|X^{t+1})+  \sum_{e\in O' \setminus (X^{t+1}\cup Y^q)}f(e|X^{t+1}) 
		\\
		& \leq  \sum_{e\in O' \cap Y^q}f(e|Y^{<e})+  \sum_{e\in O' \setminus (X^{t+1}\cup Y^q)}f(e|X^{t+1}) 
		\label{ine:lem7.11}
		\end{align}
		Similarly, we also get
		\begin{align}
		f(O'\cup Y^q)-f(Y^q)& \leq \sum_{e\in O' \setminus Y^q}f(e|Y^q)
		\\
		& =\sum_{e\in O' \cap X^{t+1}}f(e|Y^{q})+ 
		\sum_{e\in O' \setminus (X^{t+1}\cup Y^q)}f(e|Y^q)
		\\
		&\leq  \sum_{e\in O' \cap X^{t+1}}f(e|X^{<e})+  \sum_{e\in O' \setminus (X^{t+1}\cup Y^q)}f(e|Y^q)
		\label{ine:lem7.12}
		\end{align}
		Combine inequalities \eqref{ine:lem7.11} and \eqref{ine:lem7.12}, we obtain
		\begin{align}
		&f(O'\cup X^{t+1})-f(X^{t+1})+ f(O'\cup Y^q)-f(Y^q) \label{ine:lem7:c11}
		\\
		&\leq  \sum_{e\in O' \cap Y^q}f(e|Y^{<e})+  \sum_{e\in O' \setminus (X^{t+1}\cup Y^q)}f(e|X^{t+1})  \nonumber
		\\
		& + \sum_{e\in O' \cap X^{t+1}}f(e|X^{<e})+  \sum_{e\in O' \setminus (X^{t+1}\cup Y^q)}f(e|Y^q) \label{ine}
		\end{align}
		We bound the right hand side of \eqref{ine} by finding the connection between $O' \setminus (X^{t+1}\cup Y^q)$ and $X^{t+1}\setminus O'$. The density gain of any element $e\in  O'\setminus (X^{t+1}\cup Y^q)$ is less than the threshold at the  $l(x_{t+1})$-th iteration, i.e., $f(e|X^{t+1})\leq \frac{\theta^X_{(t+1)}}{1-\epsilon'}$ which implies
		\begin{align}
		\sum_{e\in O'\setminus (X^{t+1}\cup Y^q)}f(e|X^{t+1}) &\leq \sum_{e\in O'\setminus (X^{t+1}\cup Y^q)}f(e|X^{<e})
		\\
		&\leq c(O'\setminus (X^{t+1}\cup Y^q))\frac{\theta^X_{(t+1)}}{1-\epsilon'}.
		\\
		\Longrightarrow \theta^X_{(t+1)} & \geq  \frac{(1-\epsilon')\sum_{e\in O'\setminus (X^{t+1}\cup Y^q)}f(e|X^{t+1})}{c(O'\setminus (X^{t+1}\cup Y^q))}.
		\end{align}
		Therefore 
		\begin{align}
		\sum_{e \in X^{t+1}\setminus O'}f(e|X^{<e})&\geq c(X^{t+1}\setminus O')\theta^X_{(t+1)}
		\\
		&\geq(1-\epsilon')\frac{c(X^{t+1}\setminus  O')}{ c(O'\setminus (X^{t+1}\cup Y^q))}	\sum_{e\in O'\setminus (X^{t+1}\cup Y^q)}f(e|X^{t+1})
		\\
		&\geq  (1-\epsilon')\sum_{e\in O'\setminus (X^{t+1}\cup Y^q)}f(e|X^{t+1}) \label{lem7:c1}
		\end{align}
		where the lase inequality is due to the fact that $c(X^{t+1})>c(O')$ thus $c(X^{t+1}\setminus O')\geq c(O'\setminus X^{t+1})\geq c(O'\setminus (X^{t+1}\cup Y^q))$.
	Applying the similar transform from \eqref{ine:lem7:c11} to \eqref{lem7:c1} with note that $Y^{u+1} \subseteq Y^q$, we have
		\begin{align}
		\sum_{e\in O'\setminus (X^{t+1}\cup Y^q)}f(e|Y^{q})&\leq \sum_{e\in O'\setminus (X^{<y_{u+1}}\cup Y^{u+1})}f(e|Y^{<e})
		\\
		&\leq c(O'\setminus (X^{<y_{u+1}}\cup Y^{u+1}))\frac{\theta^Y_{(u+1)}}{1-\epsilon'}.
		\end{align}
		It follows that
		\begin{align}
		\sum_{e \in Y^{q}\setminus O'}f(e|Y^{<e}) &\geq 	\sum_{e \in Y^{u+1}\setminus O'}f(e|Y^{<e})\geq  c(Y^{u+1}\setminus O')\theta^Y_{(u+1)}
		\\
		&\geq (1-\epsilon') \frac{c(Y^{u+1}\setminus O') \sum_{e\in O'\setminus (X^{t+1}\cup Y^q)}f(e|X^{t+1})}{c(O'\setminus (X^{<y_{u+1}}\cup Y^{u+1}))}  
		\\
		&
		\geq  (1-\epsilon') \sum_{e\in O'\setminus (X^{t+1}\cup Y^q)}f(e|X^{t+1}).  \label{lem7:c2}
		\end{align}
		Put  \eqref{lem7:c1} and \eqref{lem7:c2} into \eqref{ine}, we have
		\begin{align}
		&f(O'\cup X^{t+1})-f(X^{t+1})+ f(O'\cup Y^q)-f(Y^q) 
		\\
		&\leq 
		\sum_{e\in O' \cap Y^q}f(e|Y^{<e})+  	\frac{1}{1-\epsilon'}\sum_{e \in Y^{q}\setminus O'}f(e|Y^{<e})  
		\\
		&+\sum_{e\in O' \cap X^{t+1}}f(e|X^{<e})+ \frac{1}{1-\epsilon'}\sum_{e \in X^{t+1}\setminus O'}f(e|X^{<e})
		\\
		& < \frac{1}{1-\epsilon'}\sum_{e \in Y^{q}}f(e|Y^{<e})  + \frac{1}{1-\epsilon'}\sum_{e \in X^{t+1}}f(e|X^{<e})
		\\
		&\leq  \frac{f(X^{t+1})+f(Y^q)}{1-\epsilon'}.
		\end{align}
		The algorithm always selects elements with no-negative marginal gain into $X$, $Y$, so we have: $f(X^i)\leq f(X)$ and $f(Y^j)\leq f(Y)$ for $i=1, \ldots, |X|$ and $j=1,\ldots, |Y|$.
		Combine this with the submodularity of $f$, we get
		\begin{align}
		f(O)&\leq f(O')+f(r)\leq f(O'\cup X^{t+1})+f(O'\cup Y^q)+f(r)
		\\
		&< (1+\frac{1}{1-\epsilon'})(f(X^{t+1})+f(Y^q))+f(r)
		\\
		&\leq (5+\frac{2\epsilon'}{1-\epsilon'})f(S).
		\\
		&\leq (5+\epsilon )f(S).
		\end{align}
		\textbf{Case 2.} If $c(X)$ or $c(Y)$ is greater than or equal to $B-c(r)$ and the rest is less than $B-c(r)$. Without loss of generality, we assume that $c(X)\geq  B-c(r)$ and $c(Y)<B-c(r)$. 	Each element $e \in O\setminus Y$ has the  density gain with $Y$ is less than $\theta_{last}$, i.e.,
		$
		\frac{f(e|Y)}{c(e)}< \theta_{last}
		$, 
		so we get:
		\begin{align}
		f(Y\cup O)-f(Y)&\leq \sum_{e\in O\setminus Y}f(e|Y)
		\\
		& \leq   \sum_{e\in O\cap X}f(e|Y)+\sum_{e\in O\setminus (X\cup Y)}f(e|Y)
		\\
		& < f(X)+c(O)\frac{\epsilon'(1-\epsilon')\opt}{B}
		\\
		&\leq f(X)+\epsilon'(1-\epsilon')\opt.
		\end{align}
	Elements are selected into $X^{t+1}$ have the  density gain at least $\theta^X_{(t+1)}$, so we have:
		\begin{align}
		\sum_{e\in X^{t+1}\setminus O'}f(e|X^{<e})\geq \sum_{e\in X^{t+1}\setminus O'} c(e)\theta^X_{(t+1)}\geq c(X^{t+1}\setminus O')\theta^X_{(t+1)}
		\end{align}
		Therefore
		\begin{align}
		f(O'\cup X^{t+1})-f(X^{t+1})&\leq \sum_{e\in O' \cap Y^q}f(e|Y^{<e})+  \sum_{e\in O' \setminus (X^{t+1}\cup Y^q)}f(e|X^{t+1}) 
		\\
		& \leq f(Y)+c(O'\setminus (X^{t+1}\cup Y^q))\frac{\theta_{(t+1)}}{1-\epsilon'}
		\\
		&\leq f(Y)+\frac{c(O'\setminus (X^{t+1}\cup Y^q)) \sum_{e\in X^{t+1}\setminus O'} f(e|X^{<e})}{(1-\epsilon')c(X^{t+1}\setminus O')}
		\\
		& \leq f(Y)+\frac{f(X^{t+1})}{1-\epsilon'}
		\\
	&	\leq f(Y)+\frac{f(X)}{1-\epsilon'}
		\end{align}
	It follows that
		\begin{align}
		f(O)&\leq f(O\cup X^{t+1})+f(O\cup Y)
		\\
		&\leq  f(O'\cup X^{t+1})+f(r)+f(O\cup Y)
		\\
		& \leq 2f(Y)+f(X)+\frac{f(X^{t+1})}{1-\epsilon'} + f(r)+ \epsilon'(1-\epsilon')\opt
		\\
		&\leq \frac{5-4 \epsilon'}{1-\epsilon'}f(S)+ \epsilon'(1-\epsilon')\opt.
		\end{align}
		which implies that 
		\begin{align}
		f(O) &\leq \frac{5-4\epsilon'}{(1-\epsilon'(1-\epsilon'))(1-\epsilon')}f(S)
		\\
		& \leq \frac{5-4\epsilon'}{(1-\epsilon')^2}f(S) \leq (5+\frac{6-5\epsilon'}{(1-\epsilon')^2}\epsilon')f(S)
		\\ 
		&  
		\leq (5+\epsilon)f(S).
		\end{align}
		\textbf{Case 3.} Both $c(X)$ and $c(Y)$ are less than $B-c(r)$, we have:
		\begin{align}
		f(O\cup X)-f(X)+f(O\cup Y)-f(Y)&=\sum_{e\in O\setminus X}f(e|X)+\sum_{e\in O\setminus Y}f(e|Y)
		\\
		&< f(X)+f(Y)+ 2 c(O)\frac{\epsilon'(1-\epsilon')\opt}{B}
		\\
		&\leq 2f(S)+ 2\epsilon'(1-\epsilon')\opt
		\end{align}
		which implies that 
		$
		f(O)\leq f(O\cup X)+f(O\cup Y) \leq 4f(S)+2\epsilon'(1-\epsilon')\opt
		$. Hence $f(O)\leq \frac{4f(S)}{1-2\epsilon'(1-\epsilon')}<5f(S)$.
		Combining all cases, we complete the proof.
	\end{proof}
	On the remain case $c(r) \geq (1-\epsilon')B$, we find another connection between $O$ and $X$, $Y$ by consider some iteration $j$ so that $\frac{(1-\epsilon')\opt}{5  B}\leq \theta_j< \frac{\opt}{5 B}$.
	%%%% Check here
	\begin{lemma}
		If $c(r)\geq (1-\epsilon')B$, one of three things happens:
		\begin{itemize}
			\item[a)] $f(S)\geq\frac{(1-\epsilon')^2}{5}\opt$. 
			\item[b)] There exists a subset $X' \subseteq X$ so that $f(O\cup X')\leq 2f(S)+ \epsilon'(1-\epsilon)\opt$.
			\item[c)] There exists a subset $X' \subseteq X$ so that $f(O'\cup X')\leq 2f(S)+ \epsilon'\opt/5$.
		\end{itemize}
		Similarly,  one of three things happens:
		\begin{itemize}
			\item[d)] $f(S)\geq\frac{(1-\epsilon')^2}{5}\opt$.
			\item[e)] There exists a subset $Y' \subseteq Y$ so that $f(O\cup Y')\leq 2f(S)+ \epsilon'(1-\epsilon')\opt$.
			\item[f)] There exists a subset $Y' \subseteq Y$ so that $f(O'\cup Y')\leq 2f(S)+ \epsilon'\opt/5$.
		\end{itemize}
		\label{lem:dla2}
	\end{lemma}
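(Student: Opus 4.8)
\noindent\emph{Proof plan.} The plan is to isolate one ``critical'' iteration of the main loop and read the three alternatives off from how the budget interacts with it. Since $\opt\in[M,19M]$ by the guarantee of $\LA$, the thresholds $\theta_i=19M(1-\epsilon')^i/(5\epsilon'B)$ form a geometric sequence of ratio $(1-\epsilon')$ with $\theta_0\ge\opt/(5B)$ and $\theta_{last}<\opt/(5B)$; hence there is a first index $j$ with $\theta_j<\opt/(5B)$, and then $\theta_{j-1}\ge\opt/(5B)$ forces $\theta_j\ge(1-\epsilon')\opt/(5B)$. Fix this $j$. From $c(r)\ge(1-\epsilon')B$ I extract the structural fact used throughout: $c(O')=c(O)-c(r)\le\epsilon'B$, so $c(o)\le\epsilon'B$ for each $o\in O'$ and every element of $O'$ fits on top of any set of cost $<(1-\epsilon')B$.

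The argument for $X$ then splits on $c(X_j)$, the cost of $X$ after iteration $j$ (the statement for $Y$ follows by exchanging $X$ and $Y$ verbatim, with (d) identical to (a)). If $c(X_j)\ge(1-\epsilon')B$, then since every $x\in X_j$ entered $X$ at some iteration $l(x)\le j$ with density at least $\theta_{l(x)}\ge\theta_j$, telescoping gives $f(S)\ge f(X)\ge f(X_j)=\sum_{x\in X_j}f(x\mid X^{<x})\ge c(X_j)\theta_j\ge(1-\epsilon')B\cdot\frac{(1-\epsilon')\opt}{5B}=\frac{(1-\epsilon')^2}{5}\opt$, which is case (a); this branch also absorbs the event that $r$ itself enters $X$ or $Y$ by iteration $j$. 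Otherwise $c(X_j)<(1-\epsilon')B$ and I take $X':=X_j$: now $c(X^{<o})\le c(X_j)<(1-\epsilon')B$ at every relevant moment, so $c(X^{<o})+c(o)\le(1-\epsilon')B+\epsilon'B=B$, i.e. the budget is never the reason an element of $O'$ is kept out of $X$.

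Writing $O'\setminus X_j=(O'\cap Y_j)\sqcup(O'\setminus(X_j\cup Y_j))$ and using submodularity $f(O'\cup X_j)-f(X_j)\le\sum_{o\in O'\setminus X_j}f(o\mid X_j)$, I bound the two groups. For $o\in O'\cap Y_j$: either the selection rule applied at insertion, giving $f(o\mid X^{<o})\le f(o\mid Y^{<o})$, or $X$ failed only the density test there, in which case $f(o\mid X^{<o})<c(o)\theta_{l(o)}\le f(o\mid Y^{<o})$ since $o$ did enter $Y$ with density $\ge\theta_{l(o)}$; either way submodularity gives $f(o\mid X_j)\le f(o\mid Y^{<o})$, and these telescope to $\sum_{o\in O'\cap Y_j}f(o\mid X_j)\le f(Y_j)\le f(S)$ (the algorithm adds only elements of nonnegative marginal gain). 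For $o\in O'\setminus(X_j\cup Y_j)$: $o$ is processed and rejected at iteration $j$, and as the budget is not the obstacle this forces $f(o\mid X^{<o})<c(o)\theta_j$, hence $f(o\mid X_j)\le f(o\mid X^{<o})<c(o)\theta_j$, and summing $\sum f(o\mid X_j)<c(O')\theta_j\le\epsilon'B\cdot\frac{\opt}{5B}=\epsilon'\opt/5$. Altogether $f(O'\cup X_j)\le f(X_j)+f(S)+\epsilon'\opt/5\le 2f(S)+\epsilon'\opt/5$, which is case (c); carrying this accounting out at the final iteration instead, with the heavy element $r$ included and $\theta_{last}\le\epsilon'(1-\epsilon')\opt/B$, yields instead the weaker bound $f(O\cup X')\le 2f(S)+\epsilon'(1-\epsilon')\opt$ of case (b).

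I expect the main obstacle to be the bookkeeping in the last step rather than any individual inequality: one must keep ``$X^{<o}$'' always inside the chosen cut $X'=X_j$ so that submodularity points the right way, confirm that an $O'$-element excluded from $X$ by a budget violation really forces $c(X_j)\ge(1-\epsilon')B$ (so that case collapses into (a)), and check that $O'\cap Y_j$ and $O'\setminus(X_j\cup Y_j)$ partition $O'\setminus X_j$ with the two distinct term-by-term bounds valid. A secondary subtlety is that $f$ is not monotone, so $f(Y_j)\le f(S)$ and $f(X_j)\le f(S)$ are available only because the algorithm inserts only elements of nonnegative marginal gain.
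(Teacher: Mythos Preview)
Your plan is correct and in fact tidier than the paper's own argument. Both proofs hinge on the same ingredients: locate the critical iteration $j$ with $\frac{(1-\epsilon')\opt}{5B}\le\theta_j<\frac{\opt}{5B}$, use $c(O')\le\epsilon'B$ to guarantee that the budget constraint on $X$ is never the reason an $O'$-element is rejected, and then split $O'\setminus X'$ into the part lying in $Y$ (telescoped via the selection rule) and the rest (controlled by the threshold).

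Where you differ is in the case structure. The paper additionally tracks the index $t=\max\{i:c(X^i)+c(r)\le B\}$ and distinguishes three cases: $X^t=X$ (yielding alternative (b) with $X'=X$ and threshold $\theta_{last}$), $X^{t+1}\subseteq X_j$ (yielding (a) if $c(X_j)\ge(1-\epsilon')B$, else (c) with $X'=X_j$), and $X_j\subset X^{t+1}$ (yielding (c) with $X'=X^t$). You bypass $t$ entirely and split only on $c(X_j)\gtrless(1-\epsilon')B$, landing always in (a) or (c) with $X'=X_j$; alternative (b) is never needed. This works precisely because your observation ``$c(X_j)<(1-\epsilon')B$ plus $c(o)\le\epsilon'B$ rules out budget failure for every $o\in O'$ at every iteration $\le j$'' covers all of the paper's subcases uniformly. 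The payoff is a shorter proof; the paper's finer split buys nothing extra here, though its Case~1 does explain where the weaker alternative (b) in the lemma statement naturally arises. Your anticipated obstacles (keeping $X^{<o}\subseteq X_j$, verifying the budget collapse into (a), and using nonnegative marginals for $f(Y_j)\le f(S)$) are exactly the points that need checking, and all go through as you describe.
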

	\begin{proof}
		In this case, we have $c(O\setminus \{r\})\leq \epsilon' B, c(X^t)\leq \epsilon' B$. Since the roles of $X$ and $Y$ are the same, we first consider analyzing the cases for $X$. We then derive similar outcomes for $Y$.
		\\
		\textbf{Case 1.} If $X^t$ is $X$ after ending the first loop, we have
		\begin{align}
		f(O\cup X^t)-f(X^t)&\leq \sum_{e\in O\setminus X^t} f(e|X^t)
		\\
		&\leq \sum_{e \in O\cap Y^{<x_t}} f(e|X^t) +\sum_{e O\setminus(X^t\cup Y^{<x_t})} f(e|X^t)
		\\
			&\leq \sum_{e \in O\cap Y^{<x_t}} f(e|Y^{<e}) +\sum_{e O\setminus(X^t\cup Y^{<x_t})} f(e|X^t)
		\\
		& < f(Y)+\epsilon'(1-\epsilon')\opt
		\end{align}
		Thus $	f(O\cup X^t)\leq f(X^t)+f(Y)+\epsilon'(1-\epsilon')\opt\leq 2f(S)+\epsilon'(1-\epsilon')\opt $.
		\\
		\textbf{Case 2.} If $X^t\subset X$, $X$ contains at least $t+1$ elements. 
		There exist an iteration $j$, at which we have:
		\begin{align*}
		\frac{(1-\epsilon')\opt}{5  B}\leq \theta_j=\frac{19\Gamma (1-\epsilon')^j}{5 \epsilon' B} < \frac{\opt}{5 B}.
		\end{align*}
		We further consider two following sub-cases:
		\\
		\textbf{Case 2.1.} If $X^{t+1} \subseteq X_j$.  If $c(X_j)\geq (1-\epsilon')B$, then $$f(S)\geq f(X_j)\geq c(X_j)\theta_j\geq \frac{(1-\epsilon')^2}{5}\opt.$$ 
		If $c(X_j)< (1-\epsilon')B$.
		Since $c(X_j)+\max_{e \in O\setminus \{r\}}c(e)\leq c(X_j)+ c(O\setminus\{r\})<B$, we get $\frac{f(e|X_j)}{c(e)}< \theta_j$ for any  $e \in O'\setminus(X_j\cup Y_j)$. Therefore:
		\begin{align*}
		f(X_j\cup O')-f(X_j)&\leq   \sum_{e \in O'\setminus X_j}f(e|X_j)
		\\
		& =\sum_{e \in O'\cap Y_j}f(e|X_j)+\sum_{e \in O'\setminus (X_j\cup Y_j)}f(e|X_j)
		\\
		&<f(Y_j)+ \sum_{e \in O'\setminus (X_j\cup Y_j)}c(e)\theta_j
		\\
		& < f(Y) + \epsilon' B \frac{\opt}{5 B}=f(Y)+\frac{\epsilon' \opt}{5}.
		\end{align*}
		Therefore
		\begin{align*}
		f(X_j\cup O') & \leq 
		2f(S)+\frac{\epsilon'\opt}{5} .
		\end{align*}
		\textbf{Case 2.2.} If $ X_j\subset X^{t+1}$. For any element $e \in V\setminus (X^{t}\cup Y^{<x_{t}})$, its  density gain concerning $X^{t}$ is smaller than the threshold at the previous iteration (in the first loop), thus
		\begin{align}
		\frac{f(e|X^{t})}{c(e)} < \frac{\theta_{(t+1)}}{1-\epsilon'}\leq \theta_j<\frac{\opt}{5 B}.
		\end{align}
		With notice that $c(O')< \epsilon' B$, we get
		\begin{align*}
		f(X^t\cup O')-f(X^t )&\leq\sum_{e \in O'\setminus X^t}f(e|X^t)
		\\
		& =\sum_{e \in O'\cap Y^t}f(e|X^t)+ \sum_{e \in O'\setminus  (X^t\cup Y^t)}f(e|X^t)
		\\
		& < \sum_{e \in O'\cap Y^t}f(e|Y^{<e})+ \sum_{e \in O'\setminus  (X^t\cup Y^t)}c(e) \theta_j
		\\
		& \leq f(Y^{<x_t})+ c(O') \frac{\opt}{5 B}
		\\
		& \leq f(Y)+ \frac{\epsilon' \opt}{5}.
		\end{align*}
		which implies that 
		\begin{align*}
		f(X^t \cup O') \leq 2f(S)+ \frac{\epsilon'\opt}{5}.
		\end{align*}
		Combining all the above cases, we get the proof for \textit{a)}, \textit{b)}, \textit{c)}.
		By the similarity argument for $Y'\subseteq Y$, we also have the proof for \textit{d)}, \textit{e)}, \textit{f)}.
	\end{proof}
	Finally, using Lemma~\ref{lem:r-small} and Lemma~\ref{lem:dla2}, we get the performance guarantee's $\EDL$.
	\begin{theorem}
		For any $\epsilon \in (0, 1)$,	$\EDL$ is a deterministic algorithm that has a query complexity $O(n\log(1/\epsilon)/\epsilon)$ and returns an approximation factor of $5+ \epsilon$.
		\label{theo:dla}
	\end{theorem}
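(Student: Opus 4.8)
The plan is to establish Theorem~\ref{theo:dla} for $\EDL$ by first bounding the query complexity and then assembling the approximation guarantee from Lemma~\ref{lem:r-small} and Lemma~\ref{lem:dla2}.

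\emph{Query complexity.} I would simply account for the two steps of Algorithm~\ref{alg:dla}. Step~1 calls $\LA$ once, which by~\cite{pham-ijcai23} uses $O(n)$ oracle queries. Step~2 runs its outer loop $\lceil \log_{1/(1-\epsilon')}(19/\epsilon'^2)\rceil + 2 = O(\log(1/\epsilon')/\epsilon') = O(\log(1/\epsilon)/\epsilon)$ times (using $\epsilon'=\epsilon/14$), and each pass scans $V\setminus(X\cup Y)$ once, spending a constant number of oracle calls per element: evaluate $f(e\mid X)$ and $f(e\mid Y)$, while keeping $f(X)$ and $f(Y)$ cached and refreshing them only when an element is actually inserted. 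Hence each pass costs $O(n)$ queries, the whole loop costs $O(n\log(1/\epsilon)/\epsilon)$, and this dominates Step~1, giving the claimed total.

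\emph{Approximation ratio.} I would split on the cost of the heaviest optimal element $r$. If $c(r)<(1-\epsilon')B$, then Lemma~\ref{lem:r-small} already gives $f(O)\le(5+\epsilon)f(S)$. So assume $c(r)\ge(1-\epsilon')B$ and invoke Lemma~\ref{lem:dla2}. If outcome~(a) (equivalently (d)) holds, then $\opt\le\frac{5}{(1-\epsilon')^2}f(S)$, and it suffices to check $\frac{5}{(1-\epsilon')^2}\le 5+\epsilon$; writing $g(\epsilon)=(5+\epsilon)(1-\epsilon/14)^2-5$, one has $g(0)=0$ and $g'(\epsilon)>0$ on $(0,1)$, so $g>0$ there — this is exactly where the choice $\epsilon'=\epsilon/14$ is used. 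Otherwise (a) and (d) both fail, so $X$ satisfies (b) or (c) and $Y$ satisfies (e) or (f). In each of the four combinations I would apply submodularity in the form $f(A)+f(B)\ge f(A\cup B)+f(A\cap B)$ with $\{A,B\}$ equal to $\{O\cup X',\,O\cup Y'\}$ or to $\{O'\cup X',\,O'\cup Y'\}$; since $X'\cap Y'=\emptyset$, one has $(O\cup X')\cap(O\cup Y')=O$ and $(O'\cup X')\cap(O'\cup Y')=O'$, so the bounds of Lemma~\ref{lem:dla2} yield $f(O)\le 4f(S)+O(\epsilon'\opt)$ when both bounds are of ``$O$-type'' (case (b)+(e)), and $f(O')\le 4f(S)+O(\epsilon'\opt)$ when at least one is of ``$O'$-type'' (any case involving (c) or (f)). In the latter situation I would add the heaviest element back via $f(O)\le f(O')+f(r)$ and rearrange $f(O)(1-c\epsilon')\le 4f(S)+\cdots$ to conclude $f(O)\le(5+\epsilon)f(S)$.

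The step I expect to be the real obstacle is this last one: controlling $f(r)$ in the mixed sub-cases where only an ``$O'$-type'' bound is available on one of the two sides. The argument I have in mind is a dichotomy on $r$ itself: either the density $f(r)/c(r)$ never exceeds the thresholds that are active once $\theta$ has decreased to $\Theta(\opt/B)$, in which case $f(r)=O(\epsilon'\opt)$ and is absorbed into the error term; or it does, and then at the first such iteration $r$ is either inserted into $X$ or $Y$ — so $f(S)\ge(1-\epsilon')f(r)$, since the selection rule inserts $r$ with marginal density at least $(1-\epsilon')f(r)/c(r)$ and $c(r)\ge(1-\epsilon')B$ — or it is permanently blocked by the budget, which forces both $c(X)$ and $c(Y)$ above $B-c(r)$ together with enough already-accumulated density to make $f(S)$ comparable to $\opt$, i.e.\ again an ``(a)-type'' outcome. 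After substituting the resulting bound on $f(r)$, it remains routine but necessary to verify that every one of the denominators $(1-\epsilon')$, $(1-\epsilon')^2$, $1-2\epsilon'(1-\epsilon')$ and the like appearing in the various cases still leaves the overall ratio at most $5+\epsilon$ under $\epsilon'=\epsilon/14$.
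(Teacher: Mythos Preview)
Your query-complexity bookkeeping and the overall case split (first on $c(r)$, then on which of the alternatives (a)--(f) of Lemma~\ref{lem:dla2} hold) match the paper. The divergence is only in your last paragraph, where you try to control $f(r)$ via a three-way dichotomy on how the algorithm treats $r$. The paper does something much simpler: it just uses $f(r)\le f(S)$ directly. For instance, in the $(b)\wedge(f)$ case it writes
\[
f(O)\ \le\ f(O\cup X')+f(O\cup Y')\ \le\ f(O\cup X')+f(O'\cup Y')+f(r)\ \le\ 4f(S)+f(S)+\Bigl(\epsilon'(1-\epsilon')+\frac{\epsilon'}{5}\Bigr)\opt,
\]
and then rearranges to obtain $f(O)\le 5f(S)/(1-2\epsilon')\le(5+\epsilon)f(S)$; by the $X\leftrightarrow Y$ symmetry this also covers $(c)\wedge(e)$, and $(c)\wedge(f)$ is handled in the same way.

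Your proposed dichotomy, by contrast, has a real gap. You enumerate three possibilities: (i) $f(r)/c(r)$ never clears any threshold; (ii) at the first iteration $i^*$ with $\theta_{i^*}\le f(r)/c(r)$, the element $r$ is inserted; (iii) at that iteration $r$ is budget-blocked. But there is a fourth possibility you omit: at iteration $i^*$ the sets $X,Y$ may already be non-empty, the budget may still admit $r$, and yet both $f(r\mid X)/c(r)$ and $f(r\mid Y)/c(r)$ lie strictly below $\theta_{i^*}$, because submodularity only gives $f(r\mid T)\le f(r)$ and says nothing in the other direction. In the non-monotone setting this does \emph{not} force $f(X)\ge f(r)$ or $f(Y)\ge f(r)$, so the asserted bound $f(S)\ge(1-\epsilon')f(r)$ does not follow; $r$ can then linger outside $X\cup Y$ until much smaller thresholds, and none of your three branches captures this. (Branch (i) is also stated loosely: comparing $f(r)/c(r)$ to a threshold of order $\opt/B$ yields only $f(r)=O(\opt)$, not $O(\epsilon'\opt)$; you would need $\theta_{last}=\Theta(\epsilon'\opt/B)$ for that conclusion.) The clean route is the paper's: invoke $f(r)\le f(S)$ and absorb it as the fifth copy of $f(S)$, with no dichotomy on $r$ at all.
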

	\begin{proof}
		$\DLA$ fist calls $\LA$ in \cite{pham-ijcai23} to find a feasible solution $S'$. This task takes $O(n)$ queries (Theorem 1 in \cite{pham-ijcai23}). It then consists of $O(\log(1/\epsilon')/\epsilon')$  iterations. Each iteration of these loops takes $O(n)$ queries; thus we get the total number of queries at most:
		$$
		O(n)+ n \cdot O(\frac{1}{\epsilon'}\log(\frac{1}{\epsilon'}))=O(\frac{n}{\epsilon}\log(\frac{1}{\epsilon})). 
		$$
		To prove the approximation factor, we consider the following cases:
		If $c(r)<(1-\epsilon')B$, the get the approximation factor due to Lemma~\ref{lem:r-small}, so we now consider the case $c(r)\geq (1-\epsilon')B$ by using Lemma \ref{lem:dla2}. If one of \textbf{a)} \textbf{or} \textbf{d)} happens. Since
		$\epsilon'=\frac{\epsilon}{14}<\frac{1}{14}$ we get: 
		$$
		\opt \leq \frac{5f(S)}{(1-\epsilon')^2}\leq 5(1+\frac{14}{13}\epsilon')^2 f(S) <(5+\epsilon)f(S)
		$$ thus the Theorem holds. Since the roles of $X$ and $Y$ are the same, we will consider the following cases: both \textbf{b)} and \textbf{e)} happen and both \textbf{b)} and \textbf{f)} happen
		\\
		\textbf{Case 1.} If both \textbf{b)} and \textbf{e)} happen. Applying Lemma~\ref{lem:dla2} we have
		\begin{align}
		f(O) &\leq f(O\cup X)+ f(O\cup Y) 
		\leq  4 f(S)+ 2\epsilon'(1-\epsilon)\opt
		\end{align}
		which implies that $f(O)\leq \frac{4f(S)}{1-2\epsilon'(1-\epsilon')}<5f(S)$.
		\\
		\textbf{Case 2.} If both \textbf{b)} and \textbf{f)} happen, we have
		\begin{align}
		f(O)&\leq f(O\cup X')+f(O\cup Y')
		\\
		&\leq f(O\cup X')+ f(O'\cup Y')+ f(r)
		\\
		& \leq 5f(S)+ (\epsilon'(1-\epsilon')+\frac{\epsilon'}{5})\opt
		\end{align}
		which implies that  $$f(O)\leq \frac{5f(S)}{1-(\epsilon'(1-\epsilon')+\frac{\epsilon'}{5})}\leq \frac{5}{1-2\epsilon'} f(S) \leq (5+\epsilon)f(S).$$
		Combining two cases, we obtain the proof.
	\end{proof}
	\section{Conclusions}
	In this work, we have proposed an approximation algorithm for the well-known Submodular Maximization under Knapsack constraint. Our algorithm keeps the best query complexity of $O(n)$ while improving the approximation factor from $6+\epsilon$ to $5+\epsilon$. The key technique of our algorithm is to optimize steps of the fastest algorithm in \cite{pham-ijcai23} with a tighter theoretical analysis.  In the future, we will address another valuable question: can we further improve the factor of linear query complexity in a deterministic algorithm for the $\SMK$ problem?
	\label{sec:con}
	\bibliographystyle{spmpsci} 
	\bibliography{SMK-ref}
\end{document}